\documentclass[runningheads,a4paper]{llncs}

\ifx\undefined\technicalreport
  \newcommand\tr[1]{}
  \newcommand\ntr[1]{#1}
\else
  \newcommand\tr[1]{#1}
  \newcommand\ntr[1]{}
\fi

\usepackage{amssymb}
\setcounter{tocdepth}{3}
\usepackage{graphicx}


\usepackage{amsmath}
\usepackage{amssymb}
\usepackage{graphicx}
\usepackage{theorem}
\usepackage{float}
\usepackage{algorithm}
\usepackage[noend]{algpseudocode} 
\usepackage{subfigure}
\usepackage{color}
\usepackage{colortbl}
\usepackage{xspace}

\renewenvironment{proof}{\noindent {\bf Proof: }}{\hfill\proofend\linebreak}

\newcommand{\proofend}{$\Box$}

\newcommand{\e}{\text{e}}

\newcommand{\RNum}[1]{\uppercase\expandafter{\romannumeral #1\relax}}

\usepackage{color}


\usepackage{url}
\urldef{\mailsa}\path|{alfred.hofmann, ursula.barth, ingrid.haas, frank.holzwarth,|
\urldef{\mailsb}\path|anna.kramer, leonie.kunz, christine.reiss, nicole.sator,|
\urldef{\mailsc}\path|{janson, schindel}@informatik.uni-freiburg.de|    
\newcommand{\keywords}[1]{\par\addvspace\baselineskip
\noindent\keywordname\enspace\ignorespaces#1}

\begin{document}

\mainmatter  

\newcommand\titleplain{Receiving Pseudorandom PSK}

\title{\titleplain}

\titlerunning{\titleplain}

%
%
\author{Thomas Janson \and Christian Schindelhauer}
\authorrunning{Thomas Janson \and Christian Schindelhauer}

\institute{University of Freiburg,\\
Germany\\ 
\mailsc\\
}

\toctitle{\titleplain}
\tocauthor{Thomas Janson, Christian Schindelhauer}
\maketitle

\begin{abstract}
Pseudorandom PSK \cite{jes15_mimo_in_time} enables parallel communication on the same carrier frequency and at the same time. We propose different signal processing methods to receive data modulated with pseudorandom PSK. This includes correlation with the carrier frequency which can be applied to signals in the kHz to MHz range and signal processing in the intermediate frequency where the correlation with the carrier frequency is performed analogous in the RF front end. We analyze the computation complexity for signal processing with the parameters of symbol length $T$ and number of repetitions of each symbol $K$ with pseudorandom PSK and show that the number of operations for each sampling point is $\Theta\left(K\right)$.

\keywords{Ad-hoc network, PSK, signal-to-noise ratio, synchronization}
\end{abstract}


\section{Introduction} \label{s:introduction}
The proposed scheme for radio transmission of \cite{jes15_mimo_in_time} repeats each symbol $K$ times with pseudo random phase shifts at each repetition. If the receiver uses the same sequence of pseudo random phase shifts for correlation, it can demodulate the data from the signal. On the other hand, if a receiver uses another sequence of pseudo random phase shifts which is not applied by the sender, it receives noise with a smaller amplitude. This reduces the correlation of parallel and interfering transmissions on the same carrier frequency in expectation and makes possible to have parallel transmission at the same time, on the same carrier frequency, and possibly in the same spatial area. This is particularly interesting for ad hoc networks and distributed medium access control (MAC). 

According to \cite{jes15_mimo_in_time},  we can correlate the received signal $y$ for known synchronization with a carrier signal with frequency $f$ with
\begin{equation}
\rho\left(\tau\right) = \sum_{t=0}^{T-1}\text{e}^{-j2\pi f\cdot t/R}\cdot \sum_{k=0}^{K-1}y\left(k\cdot T+t_{k}\left(t\right)+t+\tau\right) \label{eq:corr_org}
\end{equation}
where  $\phi_{k}$ are pseudorandom phase shifts with $0\le k<K$, sample rate $R$, $T$ samples per symbol, and
\begin{equation}
t_{k}\left(t\right) = \begin{cases}
\frac{\phi_{k}\cdot R}{2\pi f} & \text{if }t\le T-\frac{\phi_{k}\cdot R}{2\pi f}\\
-T+\frac{\phi_{k}\cdot R}{2\pi f} & \text{otherwise}
\end{cases} \ . \label{eq:time_shift}
\end{equation}
In a real channel, we also encounter effects like attenuation due to path path and multi-path propagation which is not subject of the analysis in this paper.
Here, we show different signal processing methods how we can also efficiently synchronize to the transmitter signal besides demodulation of data after synchronization. This makes necessary correlate the received signal with the carrier (which is a sinusoidal carrier with pseudorandom phase shifts) at every sample position to determine the point of synchronization, i.e. start and end of a symbol. 

\subsection{Contribution}
We propose  correlation methods for the K-repetition scheme \cite{jes15_mimo_in_time} necessary for synchronization and demodulating data. We improve the number computations of a correlation at a sample time from $\Theta\left(K\cdot T\right)$ to $\Theta\left(T\right)$, where $K$ is the number of repetitions of each symbol and $T$ is the number of samples per symbol. Table~\ref{ta:corr_carrier_freq} shows the computational complexity of the proposed methods.
\begin{table}[htp]
\begin{center}\begin{tabular}{c|c|c|c|c}
sampling freq. & method & additions & multiplications & phase error \\\hline
carrier & $\rho$ & $\Theta\left(K\cdot T\right)$ & $\Theta\left(T\right)$ & $0$ \\
carrier & $\rho_{\text{ma2}}$ & $\Theta\left(K\right)$ & $\Theta\left(K\right)$ & $0$ \\
carrier& $\rho_{\text{ma3}}$ & $\Theta\left(K\right)$ & $\Theta\left(1\right)$ & $<\frac{1}{p-1}$ \\
intermediate & $\rho_{\text{ma5}}$ & $3K-1$ & $K$ & $0$
\end{tabular} \caption{Correlation methods with carrier frequency $f$ with $T$ samples respectively $p$ periods per symbol and $K$ repetitions of each symbol with pseudo-random phase shifts}
\end{center}
\label{ta:corr_carrier_freq}
\end{table}
We present methods for signal processing at the carrier frequency or at a smaller intermediate frequency, where correlation with the carrier frequency is done analogously in the radio fronted.
For sampling at the carrier frequency we assume oversampling with $T$ samples per symbol and consider the correlation at each sample point.
The phase error relates only to computational errors of the signal processing method and not e.g. errors due to sampling.

\section{Correlation with Carrier Signal}
We show algorithms for signal processing with correlation of the received signal $y\left(\tau\right)$ at sampling point $\tau$ and a reference signal containing a sinusoidal carrier with frequency $f$. We suppose oversampling with a rate $R$ with $T$ samples per symbol.

In the first algorithm $\rho_{\text{ma1}}\left(\tau\right)$ for correlation at sample point $\tau$, we set $K=1$ and get PSK without repetition of the symbols. We assume that at time  $\tau=0$, the phase angle of the reference signal for correlation is zero with $\e^{-j2\pi f\cdot 0}=0$. We intend to compute the correlation of a window of $T$ samples for each sample point, i.e. for the windows with sample ranges $\left[0,T-1\right],\left[1,T\right], \dots, \left[\tau,\tau+T-1\right]$. If we compare the correlation sum at two subsequent points $\tau-1$ and $\tau$ with windows $\left[\tau-T,\tau-1\right]$ and $\left[\tau-T+1,\tau\right]$, the calculation only differs by the sample points $\tau-T$ and $\tau$ and the remaining $T-1$ summands are the same. Comparable to a moving average (ma), we can compute the correlation for the received signal $y\left(\tau\right)$ at sample point $\tau$ from the correlation result $\rho_{\text{ma1}}\left(\tau-1\right)$ at the preceding sample point with
\begin{equation}
\rho_{\text{ma1}}\left(\tau\right) = \rho_{\text{ma1}}\left(\tau-1\right) + y\left(\tau\right)\cdot \e^{-j2\pi f\cdot \tau/R} - y\left(\tau-T\right)\cdot \e^{-j2\pi f\cdot \left(\tau-T\right)/R} \ . \label{eq:corr_ma1}
\end{equation}
This equation is recursive and we can initialize $\rho_{\text{ma1}}\left(0\right) := 0$ and $y\left(\tau\right)=0$ for $\tau<0$.
If we buffer the intermediate results of $x\left(\tau\right)\cdot \e^{-j2\pi f\cdot \tau/R}$ in a ring buffer with $T$ elements (which is initialized with zeroes), we can save the computational costs for the third term  $x\left(\tau\right)\cdot \e^{-j2\pi f\cdot \left(\tau-T\right)/R}$. Then, we only need one multiplication and two additions of complex values for each sample point. Since we correlate the received signal $y\left(\tau\right)$ with a periodically carrier $\e^{j2\pi f \tau}$, the complex values can be pre-calculated and kept in a lookup table for given discrete sampling points.

\begin{lemma} \label{le:rho_ma1}
After $T$  sequential runs of $\rho_{\text{ma1}}\left(\tau\right) $ at $\tau = 0,\dots,T-1$, the operation $\rho_{\text{ma1}}$ outputs the same result as $\rho\left(\tau\right)$ of Equation~(\ref{eq:corr_org}) for $K=1$. 
\end{lemma}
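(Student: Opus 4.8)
The plan is to unroll the recursion (\ref{eq:corr_ma1}) into a closed form and match it term by term against the $K=1$ specialization of (\ref{eq:corr_org}). First I would substitute $K=1$ into (\ref{eq:corr_org}). Since the single symbol carries the reference phase zero, we have $\phi_0=0$, and then (\ref{eq:time_shift}) gives $t_0\left(t\right)=0$ for every $t\in\{0,\dots,T-1\}$. The inner sum over $k$ collapses to its $k=0$ term, so $\rho\left(\tau\right)=\sum_{t=0}^{T-1}\e^{-j2\pi f\cdot t/R}\,y\left(t+\tau\right)$. In particular the target value is $\rho\left(0\right)=\sum_{i=0}^{T-1}y\left(i\right)\e^{-j2\pi f\cdot i/R}$.

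Next I would abbreviate $g\left(\tau\right):=y\left(\tau\right)\e^{-j2\pi f\cdot\tau/R}$ for the buffered product, carrying over the convention $g\left(\tau\right)=0$ for $\tau<0$ from $y\left(\tau\right)=0$. Then (\ref{eq:corr_ma1}) reads $\rho_{\text{ma1}}\left(\tau\right)=\rho_{\text{ma1}}\left(\tau-1\right)+g\left(\tau\right)-g\left(\tau-T\right)$, which is exactly the incremental update of a sliding-window sum of width $T$. A short induction, telescoping the recursion from the zero initial accumulator value across the $T$ runs at $\tau=0,\dots,T-1$, shows that $\rho_{\text{ma1}}\left(\tau\right)=\sum_{i=\tau-T+1}^{\tau}g\left(i\right)$. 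The key simplification over these first $T$ runs is that the subtracted tail term $g\left(\tau-T\right)$ has a negative argument whenever $\tau\le T-1$ and therefore vanishes, so the recursion is purely accumulating; at $\tau=T-1$ the window covers exactly $\{0,\dots,T-1\}$, giving $\rho_{\text{ma1}}\left(T-1\right)=\sum_{i=0}^{T-1}g\left(i\right)$.

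Comparing the two closed forms yields $\rho_{\text{ma1}}\left(T-1\right)=\sum_{i=0}^{T-1}y\left(i\right)\e^{-j2\pi f\cdot i/R}=\rho\left(0\right)$, which is the claim. The step I expect to require the most care is the phase bookkeeping: the moving average attaches the \emph{absolute} carrier phase $\e^{-j2\pi f\cdot\tau/R}$ to sample $\tau$, whereas $\rho$ attaches the phase $\e^{-j2\pi f\cdot t/R}$ measured \emph{relative} to the start of the window. Reindexing $\rho\left(\sigma\right)$ by $i=t+\sigma$ exposes a factor $\e^{j2\pi f\cdot\sigma/R}$ between the two conventions, so they coincide term by term precisely when the window begins at sample $0$, where the reference phase is assumed zero. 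This is exactly why the lemma fixes the run count at $T$ and reads off the result at $\tau=T-1$. The remaining subtlety is the off-by-one bookkeeping of the initialization against the negative-index convention, which the induction in the second step handles.
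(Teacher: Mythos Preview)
Your proposal is correct and follows essentially the same route as the paper: specialize (\ref{eq:corr_org}) to $K=1$, telescope the recursion (\ref{eq:corr_ma1}) into a sliding-window sum of the products $y(\tau)\e^{-j2\pi f\tau/R}$, and read off the match at $\tau=T-1$ using the vanishing of the negative-index terms. The only cosmetic difference is indexing: the paper first shifts $\rho$ by $-(T-1)$ so that the identity reads $\rho_{\text{ma1}}(T-1)=\rho(T-1)$, whereas you keep the original indexing of (\ref{eq:corr_org}) and obtain $\rho_{\text{ma1}}(T-1)=\rho(0)$; these are the same equality. Your explicit remark on the absolute-vs.-relative phase convention is a welcome clarification that the paper's subsequent induction step treats less carefully.
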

\begin{proof}
For $K=1$, Equation~(\ref{eq:corr_org}) simplifies to
\begin{equation}
\rho\left(\tau\right) = \sum_{t=0}^{T-1} y\left(\tau+t-T+1\right) \cdot \text{e}^{-j2\pi f\cdot t/R} \ . \label{eq:rho_K1}
\end{equation} 
The original equation (\ref{eq:corr_org}) returns at $t=0$ which requires the reception of the samples at $\tau=\left[0,T\right)$ for simplification.
Without loss of generality we have added here the term $-\left(T-1\right)$ and get the first correlation output at $\tau=T-1$.
For the first $T$ runs, the term $y\left(\tau-T\right)$ of Equation~(\ref{eq:corr_ma1}) is zero. 
With $\rho_{\text{ma1}}\left(0\right) := 0$, we get after $T$ iterations a sum of $y\left(\tau\right)\cdot \e^{-j2\pi f\cdot \tau/R}$ with $\tau=\left[0,T\right)$ which is 
\begin{equation*}
\rho_{\text{ma1}}\left(T-1\right) = \sum_{t=0}^{T-1} y\left(t\right)\cdot \e^{-j2\pi f\cdot t/R} = \rho\left(T-1\right)\ .
\end{equation*}
The proof for $T\ge0$ follows by induction.
\begin{eqnarray*}
\rho_{\text{ma1}}\left(\tau+1\right) &=& \rho_{\text{ma1}}\left(\tau\right) \\
&&+ y\left(\tau+1\right)\cdot \e^{-j2\pi f\cdot \left(\tau+1\right)/R} - y\left(\left(\tau+1\right)-T\right)\cdot \e^{-j2\pi f\cdot \left(\left(\tau+1\right)-T\right)/R} \\
&=& \left(\sum_{t=0}^{T-1} y\left(\tau+t-T+1\right) \cdot \text{e}^{-j2\pi f\cdot t/R}\right) \\
&&+ y\left(\tau+1\right)\cdot \e^{-j2\pi f\cdot \left(\tau+1\right)/R} - y\left(\left(\tau+1\right)-T\right)\cdot \e^{-j2\pi f\cdot \left(\left(\tau+1\right)-T\right)/R} \\
&=& \sum_{t=1}^{T} y\left(\tau+t-T+1\right) \cdot \text{e}^{-j2\pi f\cdot t/R} \\
&=& \rho_{\text{ma1}}\left(\tau+1\right)
\end{eqnarray*}
\end{proof}

For a repetition of symbols with $K>1$, we can use a moving average as well with
\begin{equation}
\rho_{\text{ma2}}\left(\tau\right) = \sum_{k=0}^{K-1} \rho_{\text{ma1}}\left(\tau - T\cdot k\right) \cdot \e^{j\phi_k} \ . \label{eq:ma2}
\end{equation}
For each summand in Equation~(\ref{eq:ma2}) we need two additions and two multiplications of complex values. The overall computational complexity of $\phi_{\text{am2}}\left(\tau\right)$ is $\Theta\left(K\right)$.

\begin{lemma}
After $T\cdot K$  sequential runs of $\rho_{\text{ma2}}\left(\tau\right) $ at $\tau = 0,\dots,T-1$, the operation $\rho_{\text{ma2}}$ outputs the same result as $\rho\left(\tau\right)$ of Equation~(\ref{eq:corr_org}) for $K\ge1$. 
\end{lemma}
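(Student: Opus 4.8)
The plan is to reduce the claim to Lemma~\ref{le:rho_ma1} by substituting the closed form of $\rho_{\text{ma1}}$ into the definition~(\ref{eq:ma2}) of $\rho_{\text{ma2}}$ and then matching the resulting double sum term-by-term against the original correlation~(\ref{eq:corr_org}). First I would invoke Lemma~\ref{le:rho_ma1}, which guarantees that once at least $T$ samples have been processed each individual term $\rho_{\text{ma1}}\left(\tau-T\cdot k\right)$ equals the $K=1$ correlation evaluated on the window ending at $\tau-T\cdot k$, namely $\sum_{t=0}^{T-1} y\left(\tau - T k + t - T + 1\right)\,\e^{-j2\pi f\cdot t/R}$. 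Requiring $T\cdot K$ runs is exactly what is needed so that all $K$ shifted instances $k=0,\dots,K-1$ have their buffers fully populated and none of them still contains the initial zeros; during this warm-up the term with the largest shift $k=K-1$ is the last to become valid, which is why $T\cdot K$ rather than $T$ sequential runs are required.

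Next I would insert this closed form into~(\ref{eq:ma2}) and exchange the order of the two summations, obtaining $\rho_{\text{ma2}}\left(\tau\right)=\sum_{k=0}^{K-1}\e^{j\phi_k}\sum_{t=0}^{T-1} y\left(\tau - Tk + t - T + 1\right)\,\e^{-j2\pi f\cdot t/R}$. The backward shift by $T\cdot k$ selects the $k$-th of the $K$ past symbol windows, so after a reindexing that reconciles these backward-looking shifts with the forward offsets $k\cdot T$ of~(\ref{eq:corr_org}), each summand in $k$ should correspond to one repetition of the symbol. The essential algebraic step is the correspondence between the phase weight $\e^{j\phi_k}$ and the time shift $t_k\left(t\right)$ of~(\ref{eq:time_shift}): since $t_k\left(t\right)=\frac{\phi_k R}{2\pi f}$ in the first case, we have $\e^{j\phi_k}=\e^{j2\pi f\cdot t_k\left(t\right)/R}$, so that $\e^{j\phi_k}\e^{-j2\pi f\cdot t/R}=\e^{-j2\pi f\cdot \left(t-t_k\left(t\right)\right)/R}$; absorbing the weight into the carrier is thus equivalent to shifting the sampling instant, after which the change of variable $t\mapsto t-t_k\left(t\right)$ reproduces the phase-shifted sampling pattern of $\rho\left(\tau\right)$.

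The step I expect to be the main obstacle is reconciling the single phase factor $\e^{j\phi_k}$ with the \emph{two} cases of the time shift $t_k\left(t\right)$ in~(\ref{eq:time_shift}): the second case subtracts an extra $T$ whenever $t>T-\frac{\phi_k R}{2\pi f}$, so that $t+t_k\left(t\right)$ is effectively taken modulo $T$ and the sampling stays inside one symbol window. To show that this wrap-around does not alter the correlation I would use that a symbol spans an integer number $p$ of carrier periods, hence $T$ samples equal an integer multiple of $R/f$ and $\e^{-j2\pi f\cdot T/R}=1$. Consequently shifting the sampling index cyclically by $T$ leaves the carrier phase unchanged, so both cases of $t_k\left(t\right)$ contribute the same weighted sample and the circular re-indexing of the inner sum is a mere reordering. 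Finally, collecting the reordered terms and restoring the original order of summation yields exactly~(\ref{eq:corr_org}), up to the same harmless offset of $T-1$ in the argument that was already introduced in the proof of Lemma~\ref{le:rho_ma1}; this establishes the claim.
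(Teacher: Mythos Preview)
Your proposal is correct and follows essentially the same route as the paper: substitute the closed form of $\rho_{\text{ma1}}$ (Lemma~\ref{le:rho_ma1}) into the definition~(\ref{eq:ma2}), then identify the phase factor $\e^{j\phi_k}$ with the time shift $t_k(t)$ from~(\ref{eq:time_shift}), and finally account for the forward/backward window convention that explains the $T\cdot K$ warm-up. The paper's own proof is terser on the key step---it simply invokes \cite{jes15_mimo_in_time} for the equivalence between the phase shift $\e^{j\phi_k}$ and the delay $t_k(t)$---whereas you actually work out that correspondence and correctly flag the modulo-$T$ wrap-around (the second case of~(\ref{eq:time_shift})) as the delicate point, resolving it via the integer-period assumption $\e^{-j2\pi f T/R}=1$; this extra detail is sound and goes slightly beyond what the paper spells out.
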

\begin{proof}
We have already proven the statement for $K=1$ in the proof of Lemma~\ref{le:rho_ma1}. To proof the case $K>1$ we insert Equation~(\ref{eq:rho_K1}) into (\ref{eq:ma2}) and get
\begin{eqnarray*}
\rho_{\text{ma2}}\left(\tau\right) 
&=& \sum_{k=0}^{K-1} \left(\sum_{t=0}^{T-1} y\left(\tau-T\cdot k+t-T+1\right) \cdot \text{e}^{-j2\pi f\cdot t/R} \right) \cdot \e^{j\phi_k} \ .
\end{eqnarray*}
We state in \cite{jes15_mimo_in_time} that the delay $t_k\left(t\right)$ replaces the phase shift with operation $ \e^{j\phi_k}$ and hence $\rho_{\text{ma2}}\left(\tau\right)  = \rho\left(\tau\right)$. Please note, that we assumed for the computation of $\rho\left(\tau\right)$ to have the samples for $\left[\tau,\tau+K\cdot T\right)$ in advance and in $\rho_{\text{ma}2}$ we compute the result from the preceding samples $\left(\tau-K\cdot T, \tau\right]$ which taken into account in the last transformation of the equation. Consequently, we need $T\cdot K$ iterations before  $\rho_{\text{ma2}}\left(\tau\right) $ is initialized and outputs the correlation of a window of length~$T\cdot K$.
\end{proof}

In the next calculation procedure $\phi_{\text{ma3}}\left(\tau\right)$, we reduce the number of multiplications from $2\cdot K$ to one with the sacrifice of a precision loss.
\begin{eqnarray}
\rho_{\text{ma3}}\left(\tau\right) &=& 
\rho_{\text{ma3}}\left(\tau-1\right) \nonumber\\
&&+ \left(\sum_{k=0}^{K-1} y\left(\tau + t'_k \right)\right) \cdot \e^{-j2\pi f\cdot \tau/R} \nonumber\\
&&- \left(\sum_{k=0}^{K-1} y\left(\tau -T + t'_k \right)\right)\cdot \e^{-j2\pi f\cdot \left(\tau-T\right)/R} \ . \label{eq:corr_ma3}
\end{eqnarray}
with pre-calculated values
\begin{equation}
t'_k := \frac{\phi_{k}\cdot R}{2\pi f}  \ . \label{eq:time_shift2}
\end{equation}
We assume that we compute the correlation for each sampling point $\tau=0,1,\dots$ sequentially. The third summand of Equation~(\ref{eq:corr_ma3})  has already been computed for the correlation at $\left(T-1\right)$ sample points before and we can buffer these intermediate results in a  ring buffer with $T$ elements to reduce the number of computational steps. Then we only have to compute the sum of the second summand at sample points $\tau$ which are $K$ additions, perform only one multiplication with $\e^{-j2\pi f\cdot \tau/R}$ for correlation (instead of $2\cdot K$ multiplications before), and compute two additional sums of complex values for the overall sum of the three summands of Equation~(\ref{eq:corr_ma3}).

We include the phase shifts $\phi_k$ of the repeated symbols with the time shift $t'_k$ in Equation~(\ref{eq:time_shift2}). The difference to Equation~(\ref{eq:time_shift}) is that we do not perform the modulo-operation and we will make a computational error at the end of each symbol where the the sample point $\tau + t'_k $ is in the following sample (instead of at the beginning of the correct symbol with the modulo operation). Let us estimate this error and assume that each symbol is send for $p$ periods of the carrier frequency, i.e. for time $p/f$ for carrier frequency $f$. In worst case, $\phi_1=0$ and $\phi_{k'}=2\pi$ for $2\le k'\le K$. Then we read almost all information of the last period from the subsequent next symbol. We can represent the correlation of $p$ periods as superposition of the correlation of $p$ individual correlations, where we know that the last is erroneous. 
\begin{figure}[hbt]
	\begin{center}
	\includegraphics[scale=0.6]{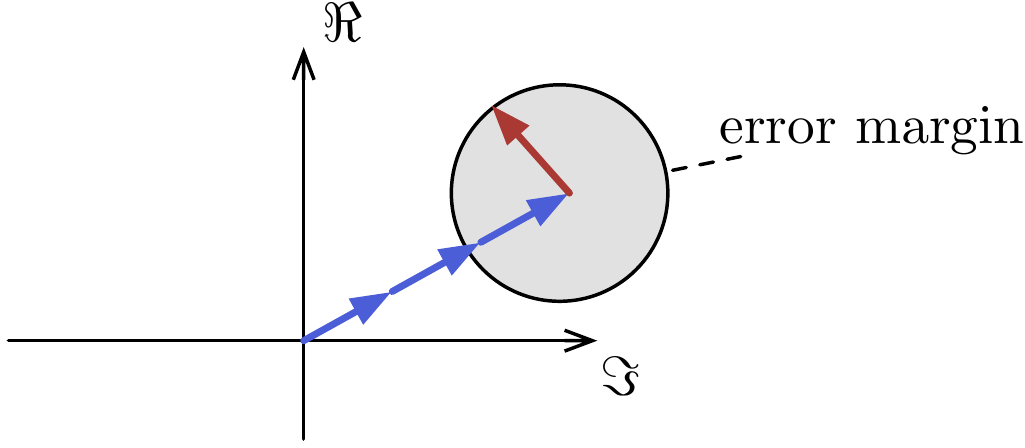}
	\end{center}
	\caption{Error margin of correlation method $\rho_{ma3}$ in Equation~(\ref{eq:corr_ma3}) for $P=4$ periods}
	\label{fig:corr_ma3_error}
\end{figure}
Figure~\ref{fig:corr_ma3_error} shows an example of a correlation with $p=4$ periods of the carrier where each vector illustrates the correlation of one period and the last correlation with the red-colored vector is erroneous. 
The error in the amplitude is at most $1/(p-2)$ if the erroneous correlation of the last period has a phase shift of $\pi$. The phase error is at most $\sin\left(\frac{1}{p-1}\right)\le \frac{1}{p-1}$ for $p\ge2$. With the sacrifice of the computation error we can first compute the sum of $K$ samples and then perform the multiplication with the complex value to compute the correlation. Compared to Equation~(\ref{eq:ma2}), this reduces the number of multiplications of a correlation from $2\cdot K$ to $1$.

\section{Processing at an Intermediate Frequency}

For carrier frequencies in the GHz range, oversampling and correlation with a rate larger than the carrier frequency is mostly too fast for current hardware and correlation is implemented in hardware. The output of the radio front end are complex values in an intermediate frequency which is slower than the carrier frequency. Let us assume that we have $G$ complex-valued samples $y_c\left(\tau\right)$ in the intermediate frequency. The discrete sample points $\tau$ are in this case in the intermediate frequency.

Just like in Equation~(\ref{eq:corr_ma1}) for a correlation with the carrier, we can use  once again a moving average at the intermediate frequency for $K=1$ with
\begin{equation}
\rho_{\text{ma4}}\left(\tau\right) = \rho_{\text{ma4}}\left(\tau-1\right) + y_c\left(\tau\right) - y_c\left(\tau-G\right) \ . \label{eq:corr_ma4}
\end{equation}
Since the correlation is performed  in analog hardware in the radio front end, this only needs two (complex) additions.

For $K>1$ we can use $K$ moving averages with
\begin{equation}
\rho_{\text{ma5}}\left(\tau\right) = \sum_{k=0}^{K-1} \rho_{\text{ma4}}\left(\tau - k\cdot G\right) \cdot \e^{j\phi_k} \ . \label{eq:corr_ma5}
\end{equation}
A correlation with $\rho_{\text{ma5}}\left(\tau\right)$ needs $3\cdot K-1$ complex additions and $K$ complex multiplications for the phase shifts with $\phi_k$.

\begin{figure}[hbt]
	\begin{center}
	\includegraphics[width=\textwidth]{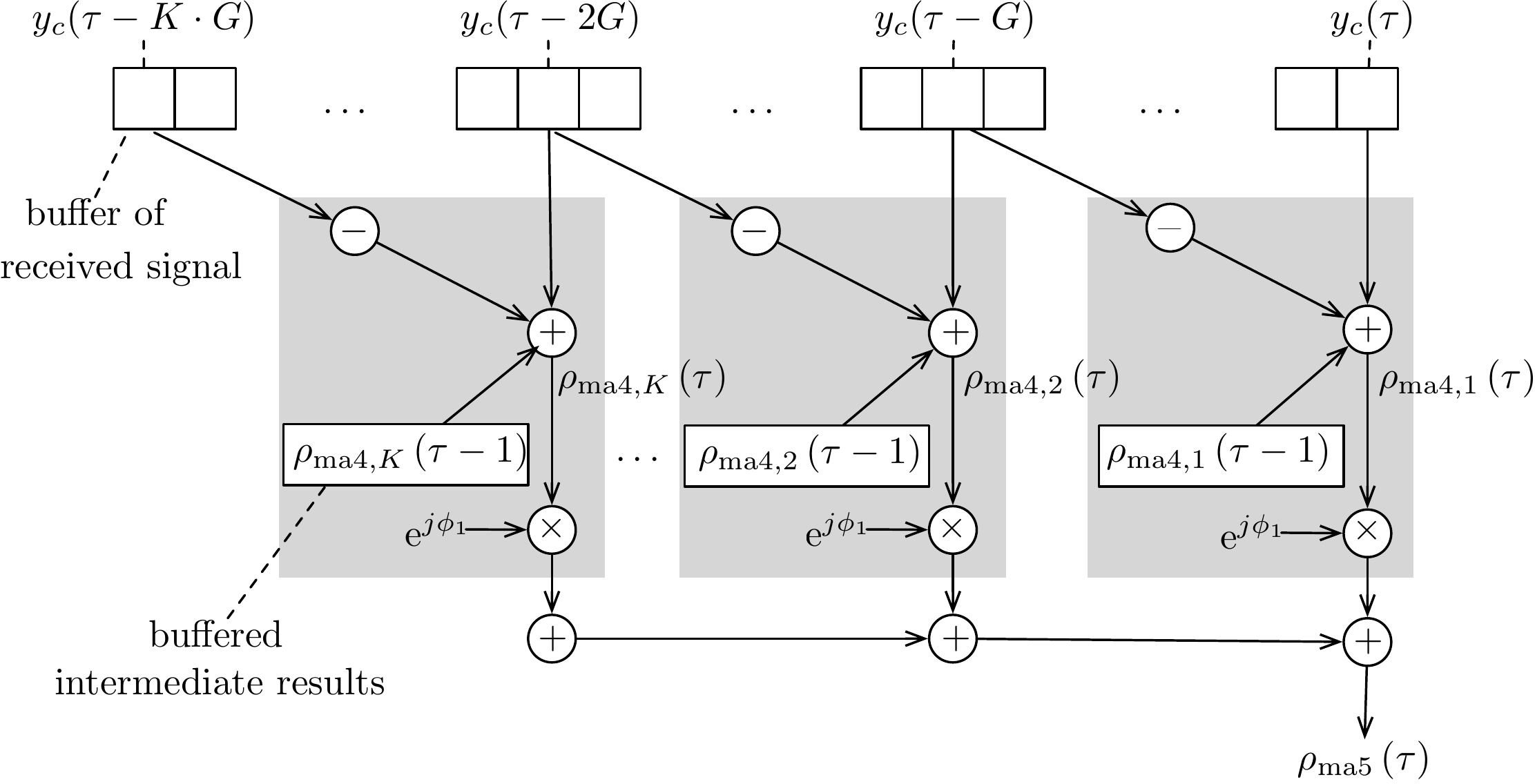}
	\end{center}
	\caption{Block diagram for signal processing of method $\rho_{\text{ma5}}\left(\tau\right)$ at sample point $\tau$ for $K$ repetitions of each symbol with pseudorandom phase shifts $\phi_k$ with $0\le k<K$}
	\label{fig:block_diagram_rho_ma5}
\end{figure}
The block diagram in Figure~\ref{fig:block_diagram_rho_ma5} shows the computation of $\rho_{\text{ma5}}\left(\tau\right)$. The computations in the grey boxes are independent from each other and can be computed in parallel. The total sum (at the bottom of the chart) depends on the intermediate results (in the grey boxes) and it is also possible to add these $K$ values with a tree-like concatenation with only $\log \left(K\right)$ instead of $K$ computational steps shown in the chart. Then the overall number of computational steps is only $\Theta\left(\log K\right)$ if we can perform $K$ operations in parallel.

\section{Conclusions}
We show that computational costs for  synchronizing and demodulating data in the K-repetition scheme are only increased by factor $\Theta\left(K\right)$ compared to signal processing of a modulation without repeating each symbol $K$ times.
This indicates the practical applicability of the K-repetition scheme.  We show  methods for correlation with the carrier (a sinusoidal signal with carrier frequency $f$ and $K$ pseudorandom phase shifts). These are mostly applicable for carrier frequencies in the kHz to MHz range where the clock speed of a processor is high enough. We also show methods for demodulating the K-repetition scheme at an intermediate frequency where correlation with the carrier is done at the radio front. This is the typical application for carrier frequencies in the GHz range and needs $\Theta\left(K\right)$ operations per sampling point. If we can run $K$ operations in parallel, we also show that the number of computational steps is only $\Theta\left(\log K\right)$.

\bibliographystyle{abbrv}
\bibliography{mimo}

\end{document}